\documentclass[a4paper,UKenglish]{lipics-v2018}

\usepackage{microtype}

\graphicspath{{./img/}}

\bibliographystyle{plainurl}
\usepackage{booktabs}
\newcommand{\Lang}[1]{%
  \ifmmode{%
    \text{\normalfont\textsc{#1}}%
  }%
  \else
  {\normalfont\textsc{#1}}%
  \fi}

\newcommand\Class[1]{%
    \mathchoice%
    {\text{\normalfont\fontsize{9pt}{10pt}\selectfont$\mathrm{#1}$}}%
    {\text{\normalfont\fontsize{9pt}{10pt}\selectfont$\mathrm{#1}$}}%
    {\text{\normalfont$\mathrm{#1}$}}%
    {\text{\normalfont$\mathrm{#1}$}}%
  }

\usepackage{ifthen}
\usepackage{framed}

\DeclareSymbolFont{extraup}{U}{zavm}{m}{n}
\DeclareMathSymbol{\varheart}{\mathalpha}{extraup}{86}
\DeclareMathSymbol{\vardiamond}{\mathalpha}{extraup}{87}

\def\tw{\mathrm{tw}}
\def\poly{\mathrm{poly}}
\def\leaf{\text{leaf}}
\def\introduce{\text{introduce}}
\def\forget{\text{forget}}
\def\join{\text{join}}
\def\edge{\text{edge}}

\def\arity{\mathrm{ar}}
\def\bit{\mathrm{bit}}
\def\symmetric{\mathrm{symmetric}}
\def\asymmetric{\mathrm{asymmetric}}

\def\jdrasil{\textsf{Jdrasil}}
\def\jatatosk{\textsf{Jatatosk}}
\def\dflat{\textsf{D-Flat}}
\def\sequoia{\textsf{Sequoia}}

\def\code#1{\texttt{#1}}

\theoremstyle{plain}
\newtheorem{observation}[theorem]{Observation}

\usepackage{xspace}
\newcommand{\ie}{i.\,e.\xspace}
\newcommand{\eg}{e.\,g.\xspace}

\usepackage{tikz}

\definecolor{uni}{HTML}{006374}
\definecolor{uniorange}{RGB}{236,116,4}
\definecolor{unired}{RGB}{228,32,50}
\definecolor{uniyellow}{RGB}{250,187,0}
\definecolor{unigreen}{RGB}{149,188,14}
\definecolor{uniblue}{RGB}{0,106,163}

\colorlet{col1}{uniorange}
\colorlet{col2}{uniblue}
\colorlet{col3}{unigreen}
\colorlet{col4}{unired}

\newcounter{todos}
\newcommand{\todo}[2][]{
  \stepcounter{todos}
  \def\bannachempty{#1}\ifx\bannachempty\empty%
   \textcolor{red!70!black}{(\arabic{todos})}%
   \marginpar{%
      \leavevmode{\raggedright\textcolor{red!70!black}{(\arabic{todos}) #2}}
    }%
  \else%
    \leavevmode{\raggedright\textcolor{red!70!black}{(\arabic{todos}) #2}}
  \fi  
}

\newcommand{\logicalObject}[7]{%
  \framebox{\parbox[t]{6.7cm}{%
    \underline{#1}\hfill\def\bannachempty{#2}\ifx\bannachempty\empty\else\cite{#2}\fi\vspace{.5ex}\\    
  \begin{tabular}{lp{4cm}}
    \#Bit:     & {\raggedright #3}\\
    Introduce: & {\raggedright #4} \\
    Forget:    & {\raggedright #5} \\
    Edge:      & {\raggedright #6}\\
    Join:      & {\raggedright #7} \\    
  \end{tabular}%
  }}
}

\usepackage{newfloat}
\DeclareFloatingEnvironment[name=Experiment,within=none]{experiment}

\title{Practical Access to Dynamic Programming on Tree Decompositions}


\author{Max Bannach}{Institute for Theoretical Computer Science,
  Universit\"at zu L\"ubeck, L\"ubeck, Germany}{bannach@tcs.uni-luebeck.de}{https://orcid.org/0000-0002-6475-5512}{}

\author{Sebastian Berndt}{Department of Computer Science,
Kiel University, Kiel, Germany}{seb@informatik.uni-kiel.de}{https://orcid.org/0000-0003-4177-8081}{}






\authorrunning{M.~Bannach and S.~Berndt}

\Copyright{Max Bannach and Sebastian Berndt}

\subjclass{\ccsdesc[100]{Theory of computation~Design and analysis of algorithms}}

\keywords{fixed-parameter tractability, treewidth, model-checking}

\category{}

\relatedversion{}

\supplement{}

\funding{}


\EventEditors{John Q. Open and Joan R. Access}
\EventNoEds{2}
\EventLongTitle{42nd Conference on Very Important Topics (CVIT 2016)}
\EventShortTitle{CVIT 2016}
\EventAcronym{CVIT}
\EventYear{2016}
\EventDate{December 24--27, 2016}
\EventLocation{Little Whinging, United Kingdom}
\EventLogo{}
\SeriesVolume{42}
\ArticleNo{23}
\nolinenumbers 
\hideLIPIcs  

\begin{document}

\maketitle

\begin{abstract}
  Parameterized complexity theory has lead to a wide range of algorithmic
  breakthroughs within  the last decades, but the practicability of these
  methods for real-world problems is still not well understood. We investigate the practicability of one
  of the fundamental approaches of this field: dynamic programming on tree
  decompositions. Indisputably, this is a key technique in parameterized
  algorithms and modern algorithm design. Despite the enormous impact of this
  approach in theory, it still has very little influence on practical
  implementations. The reasons for this phenomenon are manifold. One of them is
  the simple fact that such an implementation requires a long chain of
  non-trivial tasks (as computing the decomposition, preparing
  it,\dots). We provide an easy way to implement such dynamic programs that only
  requires the definition of the update rules.
  With this interface, dynamic programs for
  various problems, such as \Lang{3-coloring}, can be implemented easily in
  about 100 lines of structured Java code.

  The theoretical foundation of the success of dynamic programming on tree
  decompositions is well understood due to Courcelle's celebrated theorem, which
  states that every \textsc{MSO}-definable problem can be efficiently solved if
  a tree decomposition of small width is given. We seek to provide practical
  access to this theorem as well, by presenting a lightweight model-checker for
  a small fragment of \textsc{MSO}. This fragment is powerful enough to
  describe many natural problems, and our model-checker turns out to be very
  competitive against similar state-of-the-art tools.
\end{abstract}

\section{Introduction}
Parameterized algorithms aim to solve intractable
problems on instances where some parameter tied to the complexity of the
instance is small. This line of research has seen enormous growth in the last
decades and produced a wide range of algorithms~\cite{CyganFKLMPPS15}. More formally, a
problem is \emph{fixed-parameter tractable} (in \textsc{fpt}), if every instance
$I$ can be solved in time $f(\kappa(I))\cdot \poly(|I|)$ for a
computable function $f$, where
$\kappa(I)$ is the \emph{parameter} of $I$. While the impact of parameterized
complexity to the theory of algorithms and complexity cannot be overstated, its
practical component is much less understood. Very recently, the investigation of
the practicability of fixed-parameter tractable algorithms for real-world
problems has started to become an important subfield (see \eg
\cite{pace,fellows2018fpt}). We investigate the practicability of dynamic
programming on tree decompositions~--~one of the most fundamental techniques of
parameterized algorithms.
%
A general result explaining the usefulness
of tree decompositions was given by Courcelle in \cite{courcelle1990monadic},
who showed that \emph{every} property that can be expressed in monadic
second-order logic is fixed-parameter tractable if it is parameterized by tree
width. By combining this result (known as Courcelle's Theorem) with the
$f(\tw(G))\cdot |G|$ algorithm of Bodlaender \cite{bodlaender1996linear} to
compute an optimal tree decomposition in \textsc{fpt}-time, a wide range of
graph-theoretic problems is known to be solvable on these tree-like graphs.
Unfortunately, both ingredients of this approach are very expensive in practice.

One of the major achievements concerning practical parameterized algorithms was
the discovery of a practically fast algorithm for treewidth due to Tamaki
\cite{tamaki2017pid}. Concerning Courcelle's Theorem, there are currently two
contenders concerning efficient implementations of it: \dflat, an Answer Set
Programming (\textsc{ASP}) solver for problems on tree
decompositions~\cite{abseher2014d}; and \sequoia, an \textsc{MSO} solver based
on model checking games~\cite{langer2013fast}. Both solvers allow to
solve very general problems and the corresponding overhead might, thus, be large
compared to a straightforward implementation of the dynamic programs for
specific problems.

\subparagraph*{Our Contributions}
In order to study the practicability of dynamic programs on tree decompositions,
we expand our tree decomposition library \jdrasil\ with an easy to use interface
for such programs: The user only needs to specify the \emph{update
  rules} for the different kind of nodes within the tree decomposition. The
remaining work~--~computing a suitable optimized tree decomposition and
performing the actual run of the dynamic program~--~are done by \jdrasil. This
allows users to implement a wide range of algorithms within very few lines of
code and, thus, gives the opportunity to test the practicability of these
algorithms quickly. This interface is presented in Section~\ref{sec:interface}.

While \dflat\ and \sequoia\ solve very general problems, the experimental
results of Section~\ref{section:Experiments} show that  na\"{\i}ve
implementations of dynamic programs might be much more efficient. In order to
balance the generality of \textsc{MSO} solvers and the speed of direct
implementations, we introduce a small \textsc{MSO} fragment, that avoids
quantifier alternation, in
Section~\ref{section:msoSolver}. By concentrating on this fragment, we are able
to build a model-checker, called \jatatosk, that runs nearly as fast as direct 
implementations of the dynamic programs.  To show the feasibility of our
approach, we compare the running times of \dflat, \sequoia, and \jatatosk\ for
various problems. It turns out that \jatatosk\ is competitive against
the other solvers and, furthermore, its behaviour is much more consistent (\ie it
does not fluctuate greatly on similar instances). We conclude that
concentrating on just a small fragment of \textsc{MSO} gives rise to practically fast
solvers, which are still able to solve a large class of problems on graphs of
bounded treewidth. 

\section{Preliminaries}
All graphs considered in this paper are undirected, that
is, they consists of a set of vertices $V$ and of a symmetric edge-relation $E\subseteq V\times V$. We assume the
reader to be familiar with basic graph theoretic terminology, see for
instance~\cite{Diestel}. A \emph{tree decomposition} of a graph
$G=(V,E)$ is a tuple $(T,\iota)$ consisting of a rooted tree $T$ and a
mapping $\iota$ from nodes of $T$ to sets of vertices of $G$ (which we
call \emph{bags}) such that (1) for all $v\in V$ there is a node $n$
in $T$ with $v\in\iota(n)$, (2) for every edge $\{v,w\}\in E$ there is
a node $m$ in $T$ with $\{v,w\}\subseteq\iota(m)$, and (3) the set
$\{\,x\mid v\in\iota(x)\,\}$ is connected in $T$ for every $v\in
V$. The \emph{width} of a tree decomposition is the maximum size of
one of its bags minus one, and the \emph{treewidth} of $G$, denoted
by $\tw(G)$, is the minimum width any tree decomposition of $G$ must
have. 

In order to describe dynamic programs
over tree decompositions, it turns out be helpful to transform a tree
decomposition into a more structured one. A \emph{nice tree decomposition} is a triple
$(T,\iota,\eta)$ where $(T,\iota)$ is a tree decomposition and
$\eta\colon V(T)\rightarrow\{\leaf, \introduce, \forget, \join\}$ is a
labeling such that (1) nodes labeled ``$\leaf$'' are exactly the leaves
of $T$, and the bags of these nodes are empty; (2) nodes $n$
labeled ``$\introduce$'' or ``$\forget$'' have exactly one child $m$ such that
there is exactly one vertex $v\in V(G)$ with either
$v\not\in\iota(m)$ and $\iota(n)=\iota(m)\cup\{v\}$ or
$v\in\iota(m)$ and $\iota(n)=\iota(m)\setminus\{v\}$,
respectively; (3) nodes $n$ labeled ``$\join$'' have exactly two children
$x,y$ with $\iota(n)=\iota(x)=\iota(y)$. A \emph{very nice tree
  decomposition} is a nice tree decomposition that also has exactly one node
labeled ``$\edge$'' for every
$e\in E(G)$, which virtually introduces the edge $e$ to the bag~--~i.\,e.,
whenever we introduce a vertex, we assume it to be ``isolated'' in the
bag until its incident edges are introduced. It is well known that
any tree decomposition can efficiently be transformed into a very nice one without
increasing its width (essentially traverse through the tree and
``pull apart'' bags)~\cite{CyganFKLMPPS15}. Whenever we talk about
tree decompositions in the rest of the paper, we actually mean very
nice tree decompositions. However, we want to stress out that all our
interfaces also support ``just'' nice tree decompositions.

We assume the reader to be familiar with basic logic terminology and
give just a brief overview over the syntax and semantic of
monadic second-order logic (\textsc{MSO}), see for instance~\cite{FlumG06}
for a detailed introduction. A \emph{vocabulary} (or \emph{signature})
$\tau=(R_1^{a_1},\dots,R_n^{a_n})$ is a set of \emph{relational
  symbols} $R_i$ of arity $a_i\geq 1$. A \emph{$\tau$-structure}
is a set $U$~--~called \emph{universe}~--~together with an
\emph{interpretation} $R_i^U\subseteq R^{a_i}$ of the relational
symbols. Let $x_1,x_2,\dots$ be a sequence of \emph{first-order
  variables} and $X_1,X_2,\dots$ be a sequence of
\emph{second-order variables} $X_i$ of arity $\arity(X_i)$. The atomic
$\tau$-formulas are $x_i=x_j$ for two first-order variables and
$R(x_{i_{1}},\dots, x_{i_{k}})$, where $R$ is either a
relational symbol or a second-order variable of arity $k$. The set of
$\tau$-formulas is inductively defined by (1) the set of atomic
$\tau$-formulas; (2) Boolean connections $\neg \phi$,
$(\phi\vee\psi)$, and $(\phi\wedge\psi)$ of $\tau$-formulas $\phi$
and $\psi$; (3) quantified formulas $\exists x\phi$ and $\forall
x\phi$ for a first-order variable $x$ and a $\tau$-formula $\phi$; (4)
quantified formulas $\exists X\phi$ and $\forall X\phi$ for a
second-order variable $X$ of arity $1$ and a $\tau$-formula
$\phi$. The set of \emph{free variables} of a formula $\phi$ consists of the
variables that appear in $\phi$ but are not bounded by a
quantifier. We denote a formula $\phi$ with free variables
$x_1,\dots,x_{k},X_1,\dots,X_{\ell}$ as
$\phi(x_1,\dots,x_{k},X_1,\dots,X_{\ell})$. Finally, we say a
$\tau$-structure $\mathcal{S}$ with an universe $U$ is a \emph{model}
of an $\tau$-formula $\phi(x_1,\dots,x_{k},X_1,\dots,X_{\ell})$ if
there are elements $u_1,\dots,u_k\in U$ and relations
$U_1,\dots,U_{\ell}$ with $U_i\subseteq U^{\arity(X_i)}$ with
 $\phi(u_1,\dots,u_k,U_1,\dots,U_{\ell})$ being true in
$\mathcal{S}$. We write
$\mathcal{S}\models\phi(u_1,\dots,u_k,U_1,\dots,U_{\ell})$ in this case.

\begin{example}
 Graphs can be modeled as $\{E^2\}$-structures with a symmetric
 interpretation of $E$. Properties such as ``is 3-colorable''
 can then be described by formulas as:
 \[
  \tilde\phi_{\text{3col}}=\exists R\exists G\exists B\; (\forall x\,
  R(x)\vee G(x)\vee B(x))\wedge(\forall x\forall y\,
  E(x,y)\rightarrow\bigwedge_{\makebox[0pt]{\footnotesize$C\in\{R,G,B\}$}}\neg C(x)\vee\neg C(y)).
\]
 For instance, we have $\raisebox{-3.55508pt}{\includegraphics{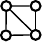}}\models\tilde\phi_{\text{3col}}$ and
 $\raisebox{-3.55508pt}{\includegraphics{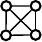}}\not\models\tilde\phi_{\text{3col}}$. 
We write $\tilde{\phi}$ whenever a more refined version of $\phi$ will be given 
later on.
\end{example}

The \emph{model-checking} problem asks, given a logical structure
$\mathcal{S}$ and a formula $\phi$, if $\mathcal{S}\models\phi$
holds. A \emph{model-checker} is a program that solves this problem and
outputs an assignment to its free and
bounded variables if $\mathcal{S}\models\phi$ holds.

\section{An Interface for Dynamic Programming on Tree  Decompositions}
\label{sec:interface}
It will be convenient to recall a classical viewpoint of dynamic programming on
tree decompositions to illustrate why our interface is designed the way
it is. We will do so by the
guiding example of $3\Lang{-coloring}$: Is it possible to color vertices of a given
graph with three colors such that adjacent vertices never share the
same color? Intuitively, a dynamic program for $3\Lang{-coloring}$
will work bottom-up on a very nice tree decomposition and manages a
set of possible colorings per node. Whenever a vertex is introduced,
the program ``guesses'' a color for this vertex; if a vertex is
forgotten we have to remove it from the bag and identify configurations
that become eventually equal; for join bags we just have to take the
configurations that are present in both children; and for edge bags we
have to reject colorings in which both endpoints of the
introduced edge have the same color. To formalize this vague
algorithmic description, we view it from the
perspective of 
automata theory.

\subsection{The Tree Automaton Perspective}\label{section:treeAutomaton}
Classically, dynamic programs on tree decompositions are described in
terms of tree automata~\cite{FlumG06}. Recall that in a very nice tree
decomposition the tree $T$ is rooted and binary; 
we assume that the children of $T$ are ordered. The mapping
$\iota$ can then be seen as a function that maps the nodes of $T$ to symbols from some
alphabet $\Sigma$.
A  na\"{\i}ve approach to manage $\iota$ would yield a huge
alphabet (depending on the size of the graph). We thus define the so called \emph{tree-index},
which is a
map $\text{idx}\colon V(G)\rightarrow \{0,\dots,\tw(G)\}$ such that
no two vertices that appear in the same bag share a common
tree-index. The existence of such an index follows directly from the
property that every vertex is forgotten exactly once: We can simply
traverse $T$ from the root to the leaves and assign a free index to a
vertex $V$ when it is forgotten, and release the used index once we reach
an introduce bag for $v$. The symbols of $\Sigma$ then only contain the
information for which tree-index there is a vertex in the bag. From a
theoreticians perspective this means that $|\Sigma|$ depends only on the
treewidth; from a programmers perspective the tree-index makes it
much easier to manage data structures that are used by the dynamic program.
\begin{definition}[Tree Automaton]
A nondeterministic bottom-up \emph{tree automaton}
is a tuple $A=(Q,\Sigma,\Delta,F)$ where $Q$ is a set of \emph{states} with
a subset $F\subseteq Q$ of \emph{accepting states}, $\Sigma$ is an \emph{alphabet},
and
$\Delta\subseteq(Q\cup\{\bot\})\times(Q\cup\{\bot\})\times\Sigma\times
Q$ is a \emph{transition relation} in which $\bot\not\in Q$ is a special
symbol to treat nodes with less than two children. The automaton is
\emph{deterministic} if for every $x,y\in Q\cup\{\bot\}$ and
every $\sigma\in\Sigma$ there is exactly one $q\in Q$ with
$(x,y,\sigma,q)\in\Delta$.
\end{definition}
\begin{definition}[Computation of a Tree Automaton]
The \emph{computation of a tree automaton} $A=(Q,\Sigma,\Delta,F)$ on
a labeled tree $(T,\iota)$ with $\iota\colon V(T)\rightarrow\Sigma$ and root
$r\in V(T)$
is an assignment $q\colon V(T)\rightarrow Q$ such that for all $n\in
V(T)$ we have (1) $(q(x),q(y),\iota(n),q(n))\in\Delta$ if $n$ has two
children $x$, $y$; (2) $(q(x),\bot,\iota(n),q(n))\in\Delta$ or
$(\bot, q(x),\iota(n),q(n))\in\Delta$ if $n$ has one
child $x$; (3) $(\bot,\bot,\iota(n),q(n))\in\Delta$ if $n$ is a
leaf. The computation is \emph{accepting} if $q(r)\in F$.
\end{definition}
\subparagraph*{Simulating Tree Automata}
A dynamic program for a decision problem can be formulated as a
nondeterministic tree automaton that works on the decomposition, see the left
side of Figure~\ref{figure:treeAutomaton} for a detailed example. Observe that a
nondeterministic tree automaton $A$ will process a labeled tree $(T,\iota)$ with
$n$ nodes in time $O(n)$. When we simulate such an automaton deterministically,
one might think that a running time of the form $O(|Q|\cdot n)$ is sufficient,
as the automaton could be in any potential subset of the $Q$ states at some node
of the tree. However, there is a pitfall: For every node we have to compute the
set of potential states of the automaton depending on the sets of potential
states of the children of that node, leading to a quadratic dependency on $|Q|$.
This can be avoided for transitions of the form $(\bot, \bot, \iota(x), p)$,
$(q, \bot, \iota(x), p)$, and $(\bot, q, \iota(x), p)$, as we can collect
potential successors of every state of the child and compute the new set of
states in linear time with respect to the cardinality of the set. However,
transitions of the form $(q_i, q_j, \iota(x), p)$ are difficult, as we now have
to merge two sets of states. In detail, let $x$ be a node with children $y$ and
$z$ and let $Q_y$ and $Q_z$ be the set of potential states in which the
automaton eventually is in at these nodes. To determine $Q_x$ we have to check
for every $q_i\in Q_y$ and every $q_j\in Q_z$ if there is a $p\in Q$ such that
$(q_i,q_j,\iota(x),p)$. Note that the number of states $|Q|$ can be quite large
even for moderately sized parameters $k$, as $|Q|$ is typically of size
$2^{\Omega(k)}$, and we will thus try to avoid
this quadratic blow-up. 
\begin{observation}\label{obs:treeAutomaton}
A tree automaton can be simulated in time
$O(|Q|^2\cdot n)$.
\end{observation}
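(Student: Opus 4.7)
The plan is a standard powerset-style bottom-up simulation. For every node $x$ of the (very) nice tree decomposition I will maintain the set $Q_x \subseteq Q$ of states that some run of $A$ can assign to $x$, and I will accept iff $Q_r \cap F \neq \emptyset$ at the root $r$. Before the sweep I preprocess $\Delta$ into lookup tables indexed by the label $\sigma \in \Sigma$ and by the pair of child-slot contents, so that for each $(u,v,\sigma) \in (Q\cup\{\bot\})^2 \times \Sigma$ I can enumerate the set $\{\, p : (u,v,\sigma,p) \in \Delta\,\}$ in time proportional to its size.

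I then walk $T$ bottom-up. At a leaf $x$, $Q_x$ is read directly from the list for $(\bot,\bot,\iota(x))$, costing $O(|Q|)$. At a node $x$ with a single child $y$, I iterate over the states $q \in Q_y$ and, for each, union in the $p$'s listed under $(q,\bot,\iota(x))$ and $(\bot,q,\iota(x))$; I represent $Q_x$ as a characteristic bit-vector of length $|Q|$ so that duplicate insertions are $O(1)$ and the cost per single-child node is $O(|Q|^2)$ in the worst case, matching what the text calls the ``linear in $|Q|$'' amortized behavior after the per-label preprocessing. At a join node $x$ with children $y,z$ I enumerate every pair $(q_i,q_j) \in Q_y \times Q_z$ and union the successors listed under $(q_i,q_j,\iota(x))$ into $Q_x$, again via the bit-vector. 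Because $Q_x$ has at most $|Q|$ elements, once an entry is marked subsequent references to it cost only $O(1)$, so the dominant work at a join node is the $O(|Q|^2)$ enumeration of pairs.

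Summing the $O(|Q|^2)$ per-node cost over the $n$ nodes of $T$ gives the claimed $O(|Q|^2 \cdot n)$ bound. The only genuinely delicate step is the join node: unlike the unary and leaf transitions, whose contributions can be charged to the states of the single child, the binary transition truly seems to force a pass over $Q_y \times Q_z$ in the absence of further structural assumptions on $\Delta$. This is precisely the pitfall the paper has just flagged, and it is exactly what motivates the search, later in the paper, for ways to avoid this quadratic blow-up.
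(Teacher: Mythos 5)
Your proposal is correct and follows essentially the same route as the paper's own justification: a deterministic powerset-style bottom-up simulation where leaf and single-child nodes are handled by collecting successors of each child state, while join nodes force an enumeration of all pairs in $Q_y\times Q_z$, giving $O(|Q|^2)$ work per node and $O(|Q|^2\cdot n)$ overall. The bookkeeping details you add (lookup tables indexed by label and bit-vector representation of $Q_x$) only make explicit what the paper's discussion leaves implicit.
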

Unfortunately, the quadratic factor in the simulation cannot be
avoided in general, as the automaton may very well contain a
transition for all possible pairs of states. However, there are some
special cases in which we can circumnavigate the increase in the
running time.
\begin{definition}[Symmetric Tree Automaton]
  A \emph{symmetric} nondeterministic bottom-up tree automaton is a
  nondeterministic bottom-up tree automaton $A=(Q,\Sigma,\Delta,F)$ in
  which all transitions $(l,r,\sigma,q)\in\Delta$ satisfy either
  $l=\bot$, $r=\bot$, or $l=r$.
\end{definition}
Assume as before that we wish to compute the set of potential states
for a node $x$ with children $y$ and $z$. Observe that in a symmetric
tree automaton it is sufficient to consider the set $Q_y\cap Q_z$ and that the
intersection of two sets can be computed in linear time if we take
some care in the design of the underlying data structures.
\begin{observation}\label{obs:symmetricTreeAutomaton}
A symmetric tree automaton can be simulated in time
$O(|Q|\cdot n)$.
\end{observation}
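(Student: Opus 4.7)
The plan is to exhibit, for each node $x$ of the decomposition, a representation of the set $Q_x\subseteq Q$ of states the automaton can be in at $x$, and to argue that this set can be computed from the information at $x$'s children in time $O(|Q|)$. Summed over the $O(n)$ nodes this yields the claimed $O(|Q|\cdot n)$ bound.

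First I would fix a data structure for the sets $Q_x$: either a characteristic bit-vector of length $|Q|$ or, equivalently, a boolean array indexed by $Q$. Such a representation supports in time $O(|Q|)$ both iteration over the elements and the computation of the intersection of two sets, and initialization of a fresh empty set. I would also, as a one-time preprocessing step, index the transition relation $\Delta$ by its ``active'' arguments: for each pair $(\sigma,q)$ keep lists of successors for transitions of the form $(\bot,\bot,\sigma,p)$, $(q,\bot,\sigma,p)$, $(\bot,q,\sigma,p)$, and $(q,q,\sigma,p)$; symmetry guarantees that these are all the transitions ever needed.

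Then I would traverse the tree bottom-up and compute $Q_x$ according to the shape of $x$. If $x$ is a leaf, $Q_x=\{\,p\mid(\bot,\bot,\iota(x),p)\in\Delta\,\}$, which is obtained by reading off the precomputed list and takes time $O(|Q|)$. If $x$ has a single child $y$, iterate over the (at most $|Q|$) elements of $Q_y$ and, for each $q\in Q_y$, insert into $Q_x$ every $p$ reachable via a transition $(q,\bot,\iota(x),p)$ or $(\bot,q,\iota(x),p)$; since the total number of successors produced at a single node is bounded by $|Q|$ (each $p$ is inserted at most once thanks to the set representation), this is $O(|Q|)$. If $x$ is a join node with children $y,z$, first compute $Q_y\cap Q_z$ in time $O(|Q|)$ using the bit-vector representation; then, crucially invoking symmetry, only transitions of the form $(q,q,\iota(x),p)$ can fire, and $q$ must lie in $Q_y\cap Q_z$ for the automaton to be simultaneously in state $q$ at both children. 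For each such $q$ we enumerate the successors and insert them into $Q_x$, again in overall $O(|Q|)$ per join node.

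Correctness follows from the definition of computation: a state $p$ lies in $Q_x$ iff there is a choice of states for $y$ and $z$ consistent with $\Delta$, and symmetry reduces the join case to the diagonal $\{(q,q):q\in Q\}$, which is exactly what the intersection captures. Finally, after reaching the root $r$, we accept iff $Q_r\cap F\neq\emptyset$, which is also an $O(|Q|)$ check. The only subtle point — and thus the main thing to get right — is the data structure supporting linear-time intersection and de-duplicated insertion; once that is in place, everything else is a mechanical case analysis matching the three shapes of nodes in a (very) nice tree decomposition.
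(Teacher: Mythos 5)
Your proposal is correct and follows essentially the same route as the paper: the key point in both is that symmetry restricts join transitions to the diagonal, so it suffices to intersect $Q_y\cap Q_z$ in linear time with a suitable set representation, while leaves and single-child nodes are handled exactly as in the general $O(|Q|^2\cdot n)$ simulation. Your bit-vector representation and the indexing of $\Delta$ are just a concrete instantiation of the paper's remark that the intersection ``can be computed in linear time if we take some care in the design of the underlying data structures.''
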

The right side of
Figure~\ref{figure:treeAutomaton} illustrates the
deterministic simulation of a symmetric tree automaton. The massive
time difference in the simulation of tree automata and
symmetric tree automata significantly influenced the design of the
algorithms in Section~\ref{section:msoSolver}, in which we try to
construct an automaton that is 1) ``as symmetric as possible'' and 2)
allows to take advantage of the ``symmetric parts'' even if the automaton
is not completely symmetric. 

\begin{figure}[h]
  \centering%
  \includegraphics{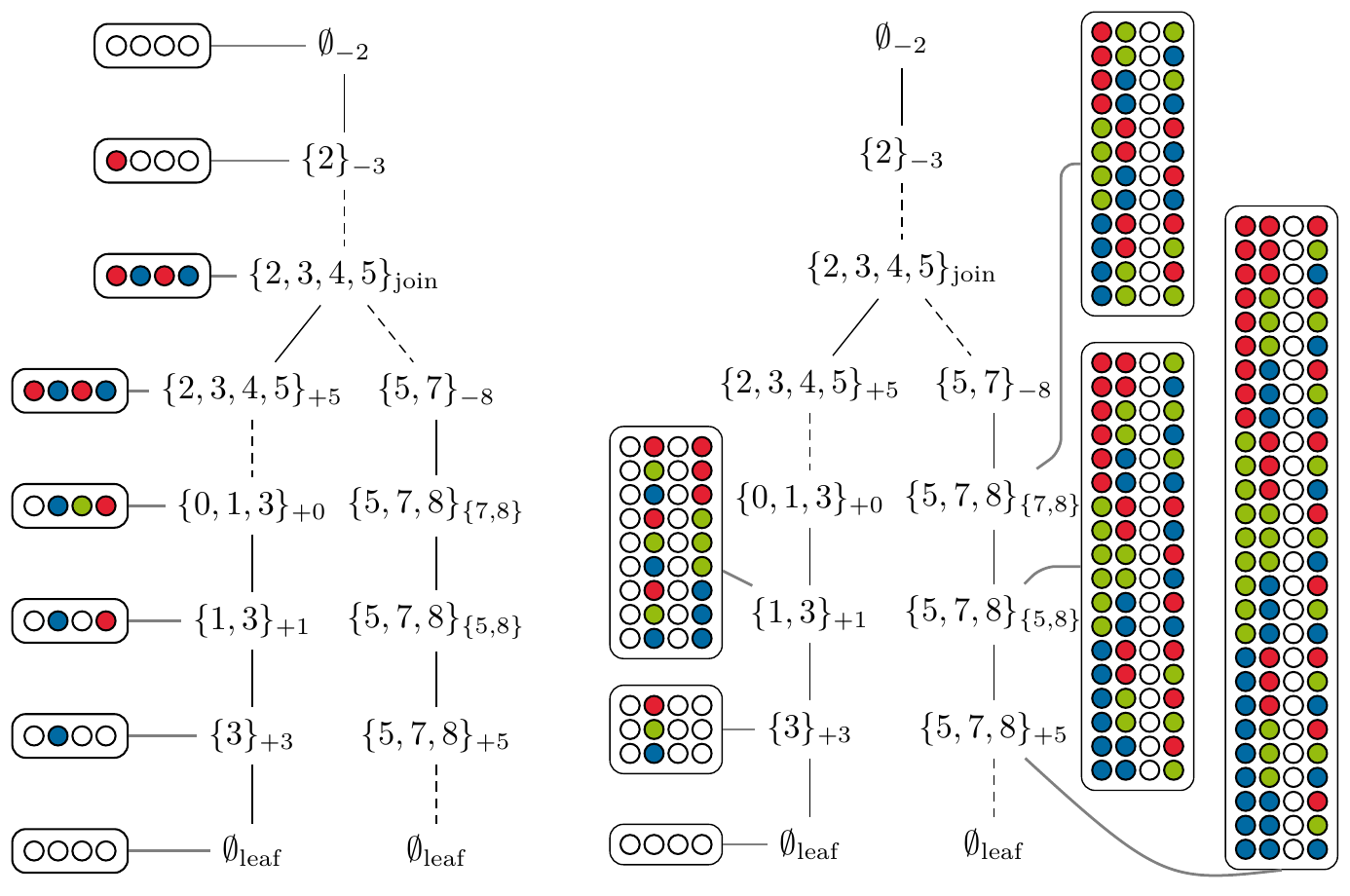}%
  \caption{The \emph{left} picture shows a part of a tree
    decomposition of the grid graph
    \raisebox{-3.55508pt}{\includegraphics{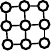}} with vertices $\{0,\dots,9\}$. The index of a bag shows the
    type of the bag: a positive sign means ``introduce'', a
    negative one ``forget'', a pair represents an ``edge''-bag, and
    text is self explanatory. Solid lines represent real edges of the
    decomposition, while dashed lines illustrate a path (i.\,e., there
    are some bags skipped). On the left branch of the decomposition a
    run of a nondeterministic tree automaton with tree-index \scalebox{.6}{$\begin{pmatrix}
      0&1&2&3&4&5&6&7&8\\
      2&3&0&1&2&3&0&1&0
    \end{pmatrix}$} for $3\Lang{-coloring}$ is illustrated. To
  increase readability, states
  of the automaton are connected to the corresponding bags with gray
  lines, and for some nodes the states are omitted. In the \emph{right} picture the same automaton is
  simulated deterministically.
  }
  \label{figure:treeAutomaton}
\end{figure}

\subsection{The Interface}
We  introduce a simple Java-interface to our library \jdrasil,
which originally was developed for the computation of tree
decompositions only. The interface is build up from two classes:
\code{StateVectorFactory} and \code{StateVector}. The only job of the
factory is to generate \code{StateVector} objects for the leaves of
the tree decomposition, or with the terms of the previous section: ``to
define the initial states of the tree automaton''. The \code{StateVector}
class is meant to model a vector of potential states in which the
nondeterministic tree automaton is at a specific node of the tree
decomposition. 
Our interface does not define at all
what a ``state'' is, or how a collection of states is managed (although most of
the times, it will be a set). The only
thing the interface requests a user to implement is the behaviour of
the tree automaton when it reaches a node of the tree-decomposition,
i.\,e., given a \code{StateVector} (for some unknown node in the tree
decomposition) and the information that the automaton reaches a
certain node, how does the \code{StateVector} for this
node look like? To this end, the interface contains the methods shown
in Listing~\ref{listing:interfaceA}.
\begin{lstlisting}[language=Java, label={listing:interfaceA},
  caption={The four methods of the interface  describe the
    behaviour of the tree automaton. Here ``T'' is a generic type for
  vertices. Each function obtains as parameter the
  current bag and a tree-index ``idx''. Other parameters
  correspond to bag-type specifics, \eg the introduced or
  forgotten vertex $v$.
  % that was introduced or forgotten.
}]
StateVector<T> introduce(Bag<T> b, T v, Map<T, Integer> idx);
StateVector<T> forget(Bag<T> b, T v, Map<T, Integer> idx);
StateVector<T> join(Bag<T> b, StateVector<T> o, Map<T, Integer> idx);
StateVector<T> edge(Bag<T> b, T v, T w, Map<T, Integer> idx);
\end{lstlisting}
This already rounds up the description of the interface, everything
else is done by \jdrasil. In detail, given a graph and an
implementation of the interface, \jdrasil\ will compute a tree
decomposition\footnote{See \cite{BannachBE17} for the concrete algorithms used
  by \jdrasil.},
transform this decomposition into a very nice
tree decomposition, potentially optimize the tree decomposition for
the following dynamic program, and finally traverse through the tree
decomposition and simulate the tree automaton described by the
implementation of the interface. The result of this procedure is the
\code{StateVector} object assigned to the root of the tree
decomposition.

\subsection{Example: 3-Coloring}\label{section:coloringSolver}
Let us illustrate the usage of the interface with our running example
of 3\Lang{-coloring}. A \code{State} of the automaton can 
be modeled as a simple integer array that stores a color (an integer)
for every vertex in the bag. A \code{StateVector} stores a set of
\code{State} objects, i.\,e., essentially a set of integer
arrays. Introducing a vertex $v$ to a \code{StateVector} therefore
means that three duplicates of each stored state have to be created,
and for every duplicate a different color has to be assigned to
$v$. Listing~\ref{listing:interfaceB} illustrates how this operation
could be realized in Java.
\begin{lstlisting}[language=Java, label={listing:interfaceB},
  caption={Exemplary implementation of the \code{introduce} method for 3\Lang{-coloring}.}]
StateVector<T> introduce(Bag<T> b, T v, Map<T, Integer> idx) {
  Set<State> newStates = new HashSet<>();
  for (State state : states) { // 'states' is the set of states
    for (int color = 1; color <= 3; color++) {
      State newState = new State(state); // copy the state
      newState.colors[idx.get(v)] = color;
      newStates.add(newState);
    }
  }
  states = newStates;
  return this;
}
\end{lstlisting}
The three other methods can be implemented in a very similar fashion:
in the \code{forget}-method we set the color of $v$ to $0$; in the \code{edge}-method we
remove states in which both endpoints of the edge have the same color;
and in the \code{join}-method we compute the intersection of the state sets of
both \code{StateVector} objects. Note that when we forget a vertex $v$,
 multiple states may become identical, which is 
handled here by the implementation of the Java \code{Set}-class, which takes care
of duplicates automatically.

A reference implementation of this 3\Lang{-coloring} solver is
publicly available~\cite{Bannach182}, and a detailed description of it can be
found in the manual of \jdrasil~\cite{BannachBE17J}. Note that this
implementation is only meant to illustrate the interface and
that we did not make any effort to optimize it. Nevertheless, this
very simple implementation (the part of the program that is
responsible for the dynamic program only contains about 120 lines of
structured Java-code) performs surprisingly well, as the
experiments in Section~\ref{section:Experiments} indicate.

\section{A Lightweight Model-Checker for a Small MSO-Fragment}\label{section:msoSolver}
Experiments with the coloring solver of the previous section have shown a huge difference in the
performance of general solvers as \dflat\ and \sequoia\ against a concrete
implementation of a tree automaton for a specific problem (see
Section~\ref{section:Experiments}). This is
not necessarily surprising, as a general solver needs to keep track of
way more information. In fact, a \textsc{MSO}-model-checker can probably (unless
$\Class{P}=\Class{NP}$) not 
run in time $f(|\phi|+\tw)\cdot\mathrm{poly}(n)$ for any elementary
function $f$~\cite{frick2004complexity}. On the
other hand, it is not clear (in general) what the concrete running time of such a solver is
for a concrete formula or problem (see e.\,g.~\cite{KNEIS2011568} for a
sophisticated analysis of some running times in \sequoia). We seek to close this gap between
(slow) general solvers and (fast) concrete algorithms. Our approach
is to concentrate only on a small fragment of \textsc{MSO},
which is powerful enough to express many natural problems, but which
is restricted enough to allow model-checking in time that matches or
is close to the running time of a concrete algorithm for the problem. As a
bonus, we will be able to derive upper bounds on the running time of the model-checker
directly from the syntax of the input formula.

Based on the interface of \jdrasil, we have implemented a publicly
available prototype called \jatatosk~\cite{Bannach18}. In Section~\ref{section:Experiments}, we perform various
experiments on different problems on multiple sets of graphs. It
turns out that  \jatatosk\ is competitive against the state-of-the-art solvers \dflat\ and
\sequoia. Arguably these two programs solve a more general problem and
a direct comparison is not entirely fair.
However, the experiments do reveal that it seems very promising to focus
on smaller fragments of \textsc{MSO} (or perhaps any other description
language) in the design of treewidth based solvers.

\subsection{Description of the Fragment}
We only consider vocabularies $\tau$ that contain the binary
relation $E^2$, and we only consider $\tau$-structures with a
symmetric interpretation of $E^2$, i.\,e., we only consider structures
that contain an undirected graph (but may also contain further relations).
The fragment of $\textsc{MSO}$ that we consider is constituted by formulas of the
form $\phi = \exists X_1\dots\exists X_k\bigwedge_{i=1}^n\psi_i$,
where the $X_j$ are second-order variables and the $\psi_i$
are first-order formulas of the form
\begin{align*}
  \psi_i\in \{\,
    &\forall x\forall y\; E(x,y)\rightarrow \chi_i,\;
    \forall x\exists y\; E(x,y)\wedge \chi_i,\;
    \exists x\forall y\; E(x,y)\rightarrow \chi_i,\\
    &\exists x\exists y\; E(x,y)\wedge \chi_i,\;
    \forall x\; \chi_i,\;
    \exists x\; \chi_i\,\}.
\end{align*}
  
Here, the $\chi_i$ are quantifier-free first-order formulas in
canonical normal form. It is easy to see
that this fragment is already powerful enough to encode many classical
problems as $3\Lang{-coloring}$ ($\tilde\phi_{\text{3col}}$ from the
introduction is part of the fragment),
or \Lang{vertex-cover} (we will discuss how to handle optimization in Section~\ref{section:optimization}):
$\tilde\phi_{\text{vc}}=\exists S\forall x\forall y\; E(x,y)\rightarrow
  S(x)\vee S(y)$.

\subsection{A Syntactic Extension of the Fragment}
Many interesting properties, such as connectivity, can easily be
expressed in \textsc{MSO}, but not directly in the fragment that we
study. Nevertheless, a lot of these properties can directly be checked by a model-checker if
it ``knows'' what kind of properties it actually checks. We present a
\emph{syntactic extension} of our
\textsc{MSO}-fragment which captures such properties. The extension
consist of three new second order quantifiers that can be used instead of
$\exists X_i$.

The first extension is a \emph{partition quantifier}, which quantifies
over partitions of the universe:
\[
  \exists^{\mathrm{partition}}X_1,\dots,X_k\equiv\exists X_1\exists
  X_2\dots\exists X_k\big(\forall x\;\bigvee_{i=1}^k X_i(x)\big)\wedge
  \big(\forall x\;\bigwedge_{i=1}^k\bigwedge_{j\neq i}\neg
  X_i(x)\wedge\neg X_j(x)\big).
\]
This quantifier has two advantages. First, formulas like
$\tilde\phi_{\text{3col}}$ can be simplified to
\[
  \phi_{\text{3col}}=\exists^{\text{partition}} R,G,B\; \forall x\forall y\,
  E(x,y)\rightarrow\bigwedge_{\makebox[0pt]{\footnotesize$C\in\{R,G,B\}$}}\neg C(x)\vee\neg C(y),
\]
and second, the model-checking problem for them can be solved more efficiently: the solver directly
``knows'' that a vertex must be added to exactly one of the sets.

We further introduce two quantifiers that work with respect to the
symmetric relation $E^2$ (recall that we only consider structures that
contain such a relation). The
$\exists^{\mathrm{connected}} X$ quantifier guesses an $X\subseteq U$
that is connected with respect to $E$ (in graph theoretic terms),
i.\,e., it quantifies over connected subgraphs. The
$\exists^{\mathrm{forest}} F$ quantifier guesses a $F\subseteq U$ that
is acyclic with respect to $E$ (again in graph theoretic terms), i.\,e., it
quantifies over subgraphs that are forests. These quantifiers are
quite powerful and allow, for instance, to express that the 
graph induced by $E^2$ contains a triangle as minor:
\begin{align*}
  \phi_{\text{triangle-minor}} =&
  \exists^{\mathrm{connected}}R\,\exists^{\mathrm{connected}}G\,\exists^{\mathrm{connected}}B\;\centerdot\\
  &\big(\forall x\, (\neg R(x)\vee\neg G(x))\wedge(\neg G(x)\vee\neg B(x))\wedge(\neg B(x)\vee\neg R(x))\big)\\
  \wedge\,& \big(\exists x\exists y\; E(x,y)\wedge R(x)\wedge G(y)\big) 
   \wedge \big(\exists x\exists y\; E(x,y)\wedge G(x)\wedge B(y)\big)\\ 
   \wedge\,& \big(\exists x\exists y\; E(x,y)\wedge B(x)\wedge R(y)\big).
\end{align*}
We can also express problems that usually require more involved
formulas in a very natural way. For instance, the 
\Lang{feedback-vertex-set} problem can be described by the following
formula (again, optimization will be handled in Section~\ref{section:optimization}):
$\tilde\phi_{\text{fvs}}=\exists S\,\exists^{\text{forest}}F\,\forall x\; S(x)\vee F(x) $.

\subsection{Description of the Model-Checker}\label{section:discription}
We describe our model-checker in terms of a nondeterministic tree automaton
that works on a tree decomposition of the graph induced by $E^2$ 
(note that, in contrast to other approaches in the literature, we do
not work on the Gaifman graph).               
We define any state of the automaton as bit-vector, and we
stipulate that the initial state at every leaf is the zero-vector. For
any quantifier or subformula, there will be some area in the bit-vector
reserved for that quantifier or subformula and we describe how state
transitions effect these bits. The ``algorithmic idea'' behind the
implementation of these transitions is not new, and a reader familiar
with folklore dynamic programs on tree decompositions (for instance
for \Lang{vertex-cover} or \Lang{steiner-tree}) will probably
recognize them. An overview over common techniques can be found in the
             standard textbooks~\cite{CyganFKLMPPS15, FlumG06}.

\subparagraph*{The Partition Quantifier} We start with a detailed description of the partition quantifier
$\exists^{\text{partition}}X_1,\dots,X_q$ (we do not implement an
additional $\exists X$ quantifier, as we can easily state
$\exists X\equiv\exists^{\text{partition}}X,\bar{X}$): Let $k$ be the
maximum bag-size of the tree decomposition. We reserve
$k\cdot\log_2 q$ bit in the state description, where each block of
length $\log_2q$ indicates in which set $X_i$ the corresponding element of the
bag is. On an introduce-bag (\eg for $v\in U$), the
nondeterministic automaton guesses an index $i\in\{1,\dots, q\}$ and sets
the $\log_2 q$ bits that are associated with the tree-index of $v$ to
$i$. Equivalently, the corresponding bits are cleared when the
automaton reaches a forget-bag. As the partition is independent of any
edges, an edge-bag does not change any of the bits reserved for the
partition quantifier. Finally, on join-bags we may only join states
that are identical on the bits describing the partition 
(as otherwise the vertices of the bag would be in different
partitions)~--~meaning this transition is symmetric with respect to
these bits (in terms of Section~\ref{section:treeAutomaton}). 

\subparagraph*{The Connected Quantifier} The next quantifier we describe
is $\exists^{\text{connected}}X$ which has to overcome the difficulty
that an introduced vertex may not be connected to the rest of the bag
in the moment it got introduced, but may be connected to it when
further vertices ``arrive''. The solution to this dilemma is to manage
a partition of the bag into $k'\leq k$ connected components
$P_1,\dots,P_{k'}$, for which we reserve $k\cdot\log_2 k$ bit in the
state description. Whenever a vertex $v$ is introduced, the automaton
either guesses that it is not contained in $X$ and clears the corresponding
bits, or it guesses that $v\in X$ and assigns some $P_i$ to
$v$. Since $v$ is isolated in the bag in the moment of its introduction
(recall that we work on a very nice tree decomposition), it requires
its own component and is therefore assigned to the smallest empty partition
$P_i$. When a vertex $v$ is forgotten, there are four possible
scenarios: 1)~$v\not\in X$, then the corresponding bits are already
cleared and nothing happens; 2)~$v\in X$ and $v\in P_i$ with
$|P_i|>1$, then $v$ is just removed and the corresponding bits are
cleared; 3)~$v\in X$ and $v\in P_i$ with
$|P_i|=1$ and there are other vertices $w$ in the bag with $w\in X$,
then the automaton rejects the configuration, as $v$ is the last
vertex of $P_i$ and may not be connected to any other partition
anymore; 4)~$v\in X$ is the last vertex of the bag that is contained in
$X$, then the connected component is ``done'', the corresponding bits
are cleared and one additional bit is set to indicate that the
connected component cannot be extended anymore. When an edge
$\{u,v\}$ is introduced, components might need to be merged. Assume $u,v\in
X$, $u\in P_i$, and $v\in P_j$ with $i<j$ (otherwise, an edge-bag does not change
the state), then we
essentially perform a classical union-operation from the well-known
union-find data structure. Hence, we assign all vertices
that are assigned to $P_j$ to $P_i$. Finally, at a join-bag we may
join two states that agree locally on the vertices that are in $X$
(i.\,e., they have assigned the same vertices to some $P_i$), however,
they do not have to agree in the way the different vertices are
assigned to $P_i$ (in fact, there does not have to be an
isomorphism between these assignments). Therefore, the transition at a join-bag has to connect
the corresponding components analogous to the edge-bags~--~in terms of
Section~\ref{section:treeAutomaton} this transition is
not symmetric. The description of the remaining quantifiers and
subformulas is very similar and presented 
in Appendix~\ref{appendix:Fragment}. 

\subsection{Extending the Model-Checker to Optimization Problems}\label{section:optimization}
As the example formulas from the previous section already indicate,
performing model-checking alone will not suffice to express many
natural problems. In fact, every graph is a model of the
formula $\tilde\phi_{\text{vc}}$ if $S$ simply contains all
vertices. It is therefore a natural extension to consider an
optimization version of the model-checking problem, which is usually
formulated as follows~\cite{CyganFKLMPPS15, FlumG06}: Given a logical structure $\mathcal{S}$, a formula $\phi(X_1,\dots,X_p)$ of the
\textsc{MSO}-fragment defined in the previous section with free
unary second-order variables $X_1,\dots,X_p$, and weight functions
$\omega_1,\dots,\omega_p$ with $\omega_i\colon
U\rightarrow\mathbb{Z}$; find $S_1,\dots,S_p$ with $S_i\subseteq U$
such that $\sum_{i=1}^p\sum_{s\in S_i}\omega_i(s)$ is \emph{minimized}
under $\mathcal{S}\models\phi(S_1,\dots,S_p)$, or conclude that
$\mathcal{S}$ is not a model for $\phi$ for any assignment of the free variables.
We can now correctly express the (actually \emph{weighted})
optimization version of \Lang{vertex-cover} as follows:
\(
  \phi_{\text{vc}}(S)=\forall x\forall y\; E(x,y)\rightarrow
  \big(S(x)\vee S(y)\big).
  \)
  
\noindent Similarly we can describe the optimization version of $\Lang{dominating-set}$ if we
assume the input does not have isolated vertices (or is reflexive),
and we can also fix the formula $\tilde\phi_{\text{fvs}}$:
\begin{align*}
  \phi_{\text{ds}}(S)=\forall x\exists y\; E(x,y)\wedge
                       \big(S(x)\vee S(y)\big),\quad 
  \phi_{\text{fvs}}(S)=\exists^{\text{forest}}F\;\forall x\, \big(S(x)\vee F(x)\big).
\end{align*}
We can also \emph{maximize} the term
$\sum_{i=1}^p\sum_{s\in S_i}\omega_i(s)$ by multiplying all weights
with $-1$ and, thus, express problems such as \Lang{independent-set}:
\(
  \phi_{\text{is}}(S)=\forall x\forall y\; E(x,y)\rightarrow
  \big(\neg S(x)\vee \neg S(y)\big).
\)
The implementation of such an optimization is straightforward:
essentially there is a partition quantifier for every free variable
$X_i$ that partitions the universe into $X_i$ and $\bar X_i$.  We assign  
a current value of $\sum_{i=1}^p\sum_{s\in S_i}\omega_i(s)$ to every
state of the automaton, which is
adapted if elements are ``added'' to some of the
free variables at introduce nodes. 
Note that, since we optimize an affine function, this
does not increase the state space: even if multiple computational
paths lead to the same state with different values at some node of the tree, it is well
defined which of these values is the optimal one. Therefore, the cost
of optimization only lies in the partition quantifier, \ie, we pay
with $k$ bits in the state description of the automaton per free
variable~--~independently of the weights.

\subsection{Handling Symmetric and Non-Symmetric Joins}
In Section~\ref{section:discription} we have defined the states of
our automaton with respect to a formula, the left side of
Table~\ref{table:bitSize} gives an overview of the number of bits we
require for the different parts of the formula. Let $\bit(\phi,k)$ be
the number of bits that we have to reserve for a formula $\phi$ and
a tree decomposition of maximum bag size $k$, i.\,e., the sum over the
required bits of each part of the formula. By Observation~\ref{obs:treeAutomaton} this
implies that we can simulate the automaton (and hence, solve the
model-checking problem) in time $O^*\big((2^{\bit(\phi,k)})^2\cdot n\big)$; or by Observation~\ref{obs:symmetricTreeAutomaton} in time
$O^*\big(2^{\bit(\phi,k)}\cdot n\big)$ if the automaton is
symmetric\footnote{The notation $O^{*}$ supresses polynomial factors.}. Unfortunately, this is not always the case, in fact, only the quantifier
$\exists^{\text{partition}}X_1,\dots,X_q$, the bits needed to
optimize over free variables, as well as the formulas that do not
require any bits, yield an symmetric tree automaton. That means that the
simulation is wasteful if we
consider a mixed formula (for instance, one that contains a partition and
a connected quantifier). To overcome this issue, we partition the bits
of the state description into two parts: first the ``symmetric'' bits
of the quantifiers $\exists^{\text{partition}}X_1,\dots,X_q$ and
the bits required for optimization, and in the ``asymmetric'' ones of all
other elements of the formula. Let $\symmetric(\phi,k)$ and
$\asymmetric(\phi,k)$ be defined analogously to $\bit(\phi,k)$. We
implement the join of states as in the following lemma, allowing 
us to deduce the
running time of the model-checker for concrete formulas. The right side of
Table~\ref{table:bitSize} provides an overview for formulas
presented here.
\begin{lemma}\label{lemma:join}
Let $x$ be a node of $T$ with children $y$ and $z$, and let $Q_y$ and
$Q_z$ be sets of states in which the automaton may be at $y$ and
$z$. Then the set $Q_x$ of states in which the automaton may be at
node $x$ can be computed in time
$O^*\big(2^{\symmetric(\phi,k)+2\cdot\asymmetric(\phi,k)}\big)$.
\end{lemma}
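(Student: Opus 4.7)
The plan is to exploit the bit partition directly by storing states in a structure that groups them by their symmetric sub-vector, and to show that the naive Cartesian enumeration only has to be performed inside each group.

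First I would represent every state $q \in Q_y \cup Q_z$ as a pair $(s(q), a(q))$, where $s(q) \in \{0,1\}^{\symmetric(\phi,k)}$ are the bits reserved for the partition quantifiers and optimization values, and $a(q) \in \{0,1\}^{\asymmetric(\phi,k)}$ are the bits reserved for the connected, forest, and remaining subformula constructions. By the discussion preceding the lemma, the join transition splits accordingly: on the symmetric bits it demands equality of the two operand states (as in the partition quantifier, where both children must assign the same block to each bag element), whereas on the asymmetric bits it performs a potentially non-symmetric merge (such as the union-find merge of connected components described for $\exists^{\mathrm{connected}} X$).

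Next I would sort, hash, or bucket the elements of $Q_y$ and $Q_z$ by their symmetric projection $s(\cdot)$ in time $\tilde O(|Q_y| + |Q_z|)$. For each symmetric key $s$ appearing in both buckets, let $A_y(s) = \{\,a(q) : q \in Q_y,\, s(q)=s\,\}$ and define $A_z(s)$ analogously. The transition rule then forces $Q_x \subseteq \{(s, f_s(a_1,a_2)) : s \in \mathrm{keys}, a_1 \in A_y(s), a_2 \in A_z(s)\}$, where $f_s$ is the deterministic rule that merges the asymmetric components (performing the union operations for the connected / forest quantifiers, comparing the remaining bookkeeping bits for the subformulas $\psi_i$, and rejecting when the rule is violated). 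Computing each image $f_s(a_1,a_2)$ takes polynomial time in $k$ and $|\phi|$.

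Summing the work, the total number of triples enumerated is
\[
\sum_{s} |A_y(s)| \cdot |A_z(s)| \;\le\; 2^{\symmetric(\phi,k)} \cdot \bigl(2^{\asymmetric(\phi,k)}\bigr)^2 \;=\; 2^{\symmetric(\phi,k) + 2\cdot\asymmetric(\phi,k)},
\]
since each bucket contains at most $2^{\asymmetric(\phi,k)}$ distinct asymmetric patterns. Every triple is processed in polynomial time, and writing the result into $Q_x$ (dedup\-licating via the same hashing scheme) costs only polylogarithmic overhead per element, giving the claimed $O^{*}$ bound.

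The step I expect to be most delicate is not the counting but justifying that the symmetric bits can indeed be handled by exact-match grouping: one must verify, quantifier by quantifier, that every construction classified as ``symmetric'' in Section~\ref{section:discription} really does impose $s(q_y)=s(q_z)$ at a join (this was argued explicitly for $\exists^{\mathrm{partition}}$, and must be checked for the optimization bookkeeping bits, where the affine objective means the two values simply add rather than forcing equality~-- so those bits are technically asymmetric in transition but symmetric in the sense that the merge is a function of the two operand bit-blocks without a quadratic blow-up, and can therefore be absorbed into the symmetric half of the analysis by a small refinement of the bucketing). Once that classification is confirmed, the counting argument above closes the proof.
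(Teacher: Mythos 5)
Your proposal is correct and follows essentially the same route as the paper's proof: bucket the states by their symmetric bit-pattern (via bucket-sort/hashing), then restrict the quadratic pairwise comparison to states within the same bucket, yielding $\sum_s |A_y(s)|\cdot|A_z(s)| \le 2^{\symmetric(\phi,k)}\cdot\big(2^{\asymmetric(\phi,k)}\big)^2$, which matches the paper's bound of $|Q_z|\cdot\max_i|B_i|$. The additional discussion of the optimization values is consistent with the paper's treatment (the value is auxiliary data attached to a state rather than state bits that must match), so no gap remains.
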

\begin{proof}
To compute $Q_x$, we first split $Q_y$ into $B_1,\dots,B_q$ such
that all elements in one $B_i$ share the same ``symmetric bits''. This
can be done in time $|Q_y|$ using bucket-sort. Note that we
have $q\leq2^{\symmetric(\phi,k)}$ and
$|B_i|\leq2^{\asymmetric(\phi,k)}$. With the same technique we identify for every
elements $v$ in $Q_z$ its corresponding partition $B_i$. Finally, we compare $v$ with the elements in $B_i$ to
identify those for which there is a transition in the automaton. This
yields a running time of $|Q_z|\cdot \max_{i=1}^q|B_i|\leq 2^{\bit(\phi,k)}\cdot2^{\asymmetric(\phi,k)}=2^{\symmetric(\phi,k)+2\cdot\asymmetric(\phi,k)}$.
\end{proof}
\begin{table}
  \caption{The left table shows the precise number of bit we reserve in the description of a
    state of the tree automaton for different quantifier and
    formulas. The values are with respect to a tree decomposition with
  maximum bag size $k$. The right table gives an overview of example
formulas $\phi$ used here, together with values
$\symmetric(\phi,k)$ and $\asymmetric(\phi,k)$, as well as the
precise time our algorithm will require for that
particular formula.}\label{table:bitSize}
  \begin{tabular}[t]{rl}
    \toprule
    Quantifier / Formula & Number of Bit\\
    \cmidrule(lr){1-2}
    free variables $X_1,\dots,X_q$ & $q\cdot k$\\
    $\exists^{\text{partition}}X_1,\dots,X_q$ & $k\cdot\log_2q$\\\vspace{.5ex}
    $\exists^{\text{connected}}X$ & $k\cdot\log_2 k+1$\\\vspace{.5ex}
    $\exists^{\text{forest}}X$ & $k\cdot\log_2 k$\\
    $\forall x\forall y\;E(x,y)\rightarrow\chi_i$ & $0$\\
    $\forall x\exists y\;E(x,y)\wedge\chi_i$ & $k$\\
    $\exists x\forall y\;E(x,y)\rightarrow\chi_i$ & $k+1$\\
    $\exists x\exists y\;E(x,y)\wedge\chi_i$ & $1$\\
    $\forall x\;\chi_i$ & $0$\\
    $\exists x\;\chi_i$ & $1$\\
    \bottomrule
  \end{tabular}
  \quad
  \begin{tabular}[t]{lp{2.5cm}l}
    \toprule
    $\phi$ & $\symmetric(\phi,k)$\newline$\asymmetric(\phi,k)$ & Time\\
    \cmidrule(rl){1-3}
    $\phi_{\text{3col}}$ & $k\cdot\log_2(3)$\newline$0$ & $O^*(3^k)$\\
    $\phi_{\text{vc}}(S)$ & $k$\newline$0$ & $O^*(2^k)$\\
    $\phi_{\text{ds}}(S)$ & $k$\newline$k$ & $O^*(8^k)$\\
    $\phi_{\text{triangle-minor}}$ & $0$\newline$3k\cdot\log_2(k)+3$ & $O^*(k^{6k+6})$\\
    $\phi_{\text{fvs}}(S)$ & $k$\newline$k\cdot\log_2(k)$ & $O^*(2^kk^{2k})$\\
    \bottomrule
  \end{tabular}
\end{table}

\section{Applications and Experiments}\label{section:Experiments}
In order to show the feasibility of our approach, we have performed
  experiments for widely investigated graph problems: \Lang{3-coloring},
\Lang{vertex-cover}, \Lang{dominating-set}, \Lang{independent-set}, and \Lang{feedback-vertex-set}.
All experiments where performed on an Intel Core processor containing four
cores of 3{.}2 GHz each and 8 Gigabyte RAM. The machine runs Ubuntu~17.10. \jdrasil\ was used with
 Java 1.8 and both \sequoia\ and \dflat\ were compiled with
  gcc~7.2. The implementation of \jatatosk\ uses hashing to realize
  Lemma~\ref{lemma:join}, which has no constant-time worst case guarantee but
  works well in practice.
  We use a data set that was assembled from three different sources and
  that contains graphs with 18 to 956 vertices and treewidth 3 to 13. The first
source is a collection of publicly available transit graphs from GTFS-transit feeds
\cite{gtfs} that was also used for experiments in
\cite{fichte2017sat}, the second source are
real-world instances collected in \cite{abseher2015dflat}, and the
last one are the publicly available graphs used in the PACE
challenge~\cite{pace} (we selected the ones with treewidth at most
  11). For \Lang{3-coloring} the results can be found in
  Experiment~\ref{experiment:coloring}, and for the other problems in
  Appendix~\ref{appendix:experiments}.
  
  \begin{experiment}[t]
    \caption{$\Lang{3-coloring}$}\label{experiment:coloring}
  \subcaption{Average, standard deviation, and median of the time (in seconds)
    each solver
    needed to solve $\Lang{3-coloring}$ over
    all instances of the data set.
    The best values are highlighted.}
  \label{table:coloring}
  \begin{tabular}{lllll}
    \toprule
    & \dflat & \textsf{Jdrasil-Coloring} & \jatatosk & \sequoia\\
    \cmidrule(rl){1-5}
    Average Time & 478.19 & \textbf{36.52} & 42.63 & 714.73\\
    Standard Deviation & 733.90 & \textbf{77.8} & 81.82 & 866.34\\
    Median Time & \textbf{3.5} & 21 & 24.5 & 20.5 \\
    \bottomrule
  \end{tabular}
  
  \subcaption{Comparison of solvers for the \Lang{3-coloring} problem on
    the complete data set.
    \label{figure:coloring}
  }
  \centering
  \resizebox{\textwidth}{!}{
    \includegraphics{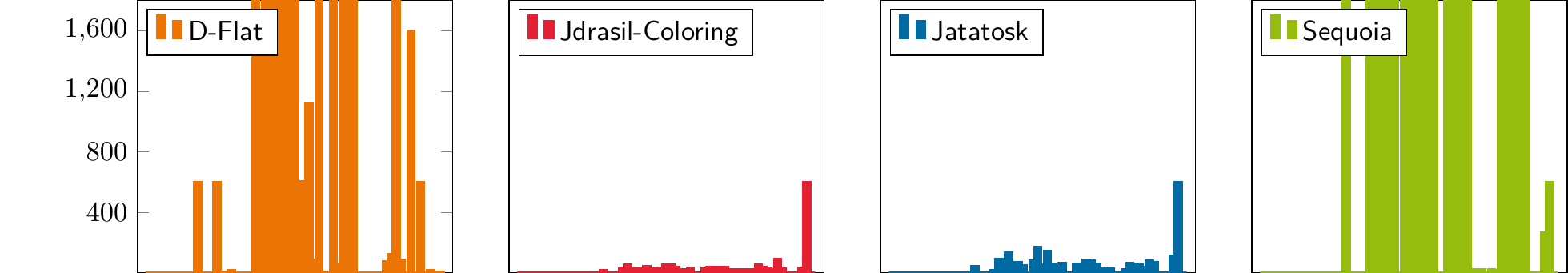}
  }
    
  \subcaption{The  left picture shows the \textcolor{col1}{difference}
    of \jatatosk\ against \dflat\ and \sequoia. A
    positive bar means that \jatatosk\ is faster by this amount in
    seconds, and a negative bar means that either \dflat\ or \sequoia\
    is faster by that amount. The bars are capped at $100$. On every instance, \jatatosk\ was compared
    against the solver that was faster on this particular instance. The
    image also shows for every instance the \textcolor{col2}{size} and
    the \textcolor{col3}{treewidth} of the input. The right image shows
    the number of instances that can be solved by each of the solvers in
    $x$ seconds, i.\,e., faster growing functions are better. The colors in this image
    are as in \textsf{(b)}.
  }
  \resizebox{\textwidth}{!}{\includegraphics{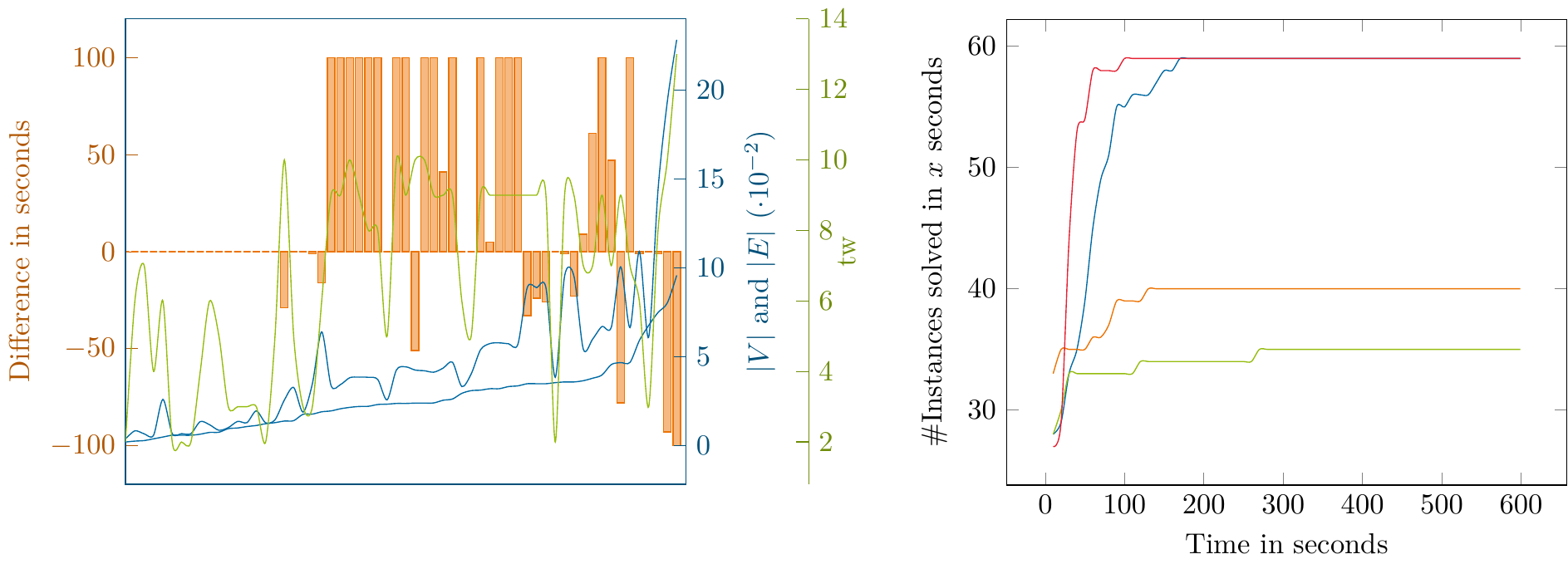}}
\end{experiment}
  \section{Conclusion and Outlook}
  We investigated the practicability of dynamic programming on tree decompositions, which 
  is arguably one of the corner stones of parameterized complexity theory.
  We implemented a simple interface for such
  programs and demonstrated how it can be used to build a competitive graph coloring solver with just a
  few lines of code. We hope that the interface allows other
  researchers to implement and explore various dynamic programs on tree decompositions.
  The whole power of such dynamic programs is well
  captured by Courcelle's Theorem, which essentially states that there is
  an efficient version of such a program for every problem definable in
  monadic second-order logic. We took a step towards practice here as
  well, by implementing a ``lightweight'' version in the form of a
  model-checker for a small fragment of the logic. By clever syntactic
  extensions, the fragment turns out to be powerful enough to express
  many natural problems such as \Lang{3-coloring},
  \Lang{feedback-vertex-set},
  and more.
%

\clearpage
\bibliography{main}

\clearpage
\appendix
\section{Technical Appendix: Description of the Fragment}\label{appendix:Fragment}
\noindent%
\begin{minipage}[t]{.49\textwidth}
\logicalObject{$\exists^{\text{forest}}X$}{}{$k\cdot\log_2k$}{%
As for $\exists^{\text{connected}}X$.%
}{%
Just clear the corresponding bits.
}{%
As for $\exists^{\text{connected}}X$, but reject if two vertices of
the same component are connected.
}{
As for $\exists^{\text{connected}}X$, but track if the join introduces
a cycle.
}
\end{minipage}
\begin{minipage}[t]{.49\textwidth}
\logicalObject{$\forall x\forall
  y\;E(x,y)\rightarrow\chi_i\vphantom{\exists^{\text{forest}}}$}{}{0}{-}{-}{Reject if $\chi_i$ is not
  satisfied for the vertices of the edge.}{-}
\end{minipage}\bigskip

\noindent
\logicalObject{$\forall x\exists y\;
  E(x,y)\wedge\chi_i$}{}{$k$}{-}{Reject if the bit corresponding to
  $v$ is not set.}{Set the bit of $v$ if $\chi_i$ is
  satisfied.}{Compute the logical-or of the bits of both states.}
\logicalObject{$\exists x\forall
  y\;E(x,y)\rightarrow\chi_i$}{}{$k+1$}{Set the corresponding
  bit.}{If the corresponding bit is set, set the additional bit.}{If
  $\chi_i$ is not satisfied, clear the corresponding bit.}{Compute the
logical-and of all but the last bit, for the last bit use a logical-or.}\bigskip

\noindent
\logicalObject{$\exists x\exists
  y\;E(x,y)\wedge\chi_i$}{}{1}{-}{-}{Set the bit if $\chi_i$ is
  satisfied.}{Compute logical-or of the bit in both states.}
\logicalObject{$\forall x\;\chi_i$ ($\exists x\;\chi_i$)}{}{0
  (1)}{Test if $\chi_i$ is satisfied and reject if not (set the bit if
so).}{-}{-}{ - (Compute logical-or of the bit in both states.)}

\clearpage
\section{Technical Appendix: Further Experiments}\label{appendix:experiments}
  We perform the experiments from Section~\ref{section:Experiments}
  for further problems. We did run every solver for a maximum of 600
  seconds on every instance for every problem. It can be seen that \jatatosk\ is competitive
  (though not superior) against its competitors, as it is faster then the
  faster of the two on many instances and its average running time is
  at most twice the time of the corresponding fastest
  algorithm. \jatatosk\ outperforms the others for \Lang{3-coloring},
  but gets outperformed for \Lang{vertex-cover} by \sequoia. The same
  holds for \Lang{independent-set}, also the difference is much
  smaller in this case.
  For \Lang{dominating-set} the situation
  is more complex, as \jatatosk\ outperforms the others on about half
  of the instances, and gets outperformed on the other
  half. Interestingly, the difference is quit high in both halves in
  both directions.
  \begin{experiment}[h]
    \caption{$\Lang{vertex-cover}$}
      \subcaption{Average, standard deviation, and median of the time (in seconds) each solver
        needed to solve $\Lang{vertex-cover}$ over
        all instances of the data set from Section~\ref{section:Experiments}. The best values are
        highlighted.}
    \label{table:vertexCover}
    \begin{tabular}{llll}
      \toprule
      & \dflat &  \jatatosk & \sequoia\\
      \cmidrule(rl){1-4}
      Average Time & 451.68 & 59.02 & \textbf{33.95}\\
      Standard Deviation  & 213.08 & 128.45 & \textbf{92.45}\\
      Median Time & 597.5 & 30 & \textbf{6}\\
      \bottomrule
    \end{tabular}
    
  \subcaption{Comparison of solvers for the \Lang{vertex-cover} problem on
    the complete data set of Section~\ref{section:Experiments}.
    The
    time is measured in seconds and the instances are sorted by the
    number of vertices, see also \textsf{(c)}.
    \label{figure:vertexCover}}
    
  \centering%
  \resizebox{\textwidth}{!}{\includegraphics{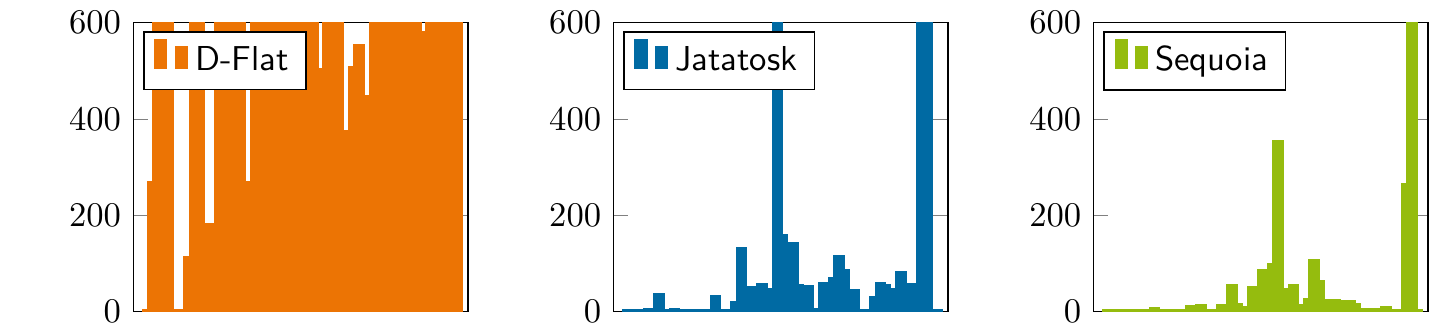}}%
 
  \subcaption{The left picture shows the \textcolor{col1}{difference}
    of \jatatosk\ against \dflat\ and \sequoia. A
    positive bar means that \jatatosk\ is faster by this amount in
    seconds, and a negative bar means that either \dflat\ or \sequoia\
    is faster by that amount. The bars are capped at $100$. On every instance, \jatatosk\ was compared
    against the solver that was faster on this particular instance. The
    image also shows for every instance the \textcolor{col2}{size} and
    the \textcolor{col3}{treewidth} of the input. The right image shows
    the number of instances that can be solved by each of the solvers in
    $x$ seconds, i.\,e., faster growing functions are better. The colors in this image
    are as in \textsf{(b)}.}
  \resizebox{\textwidth}{!}{\includegraphics{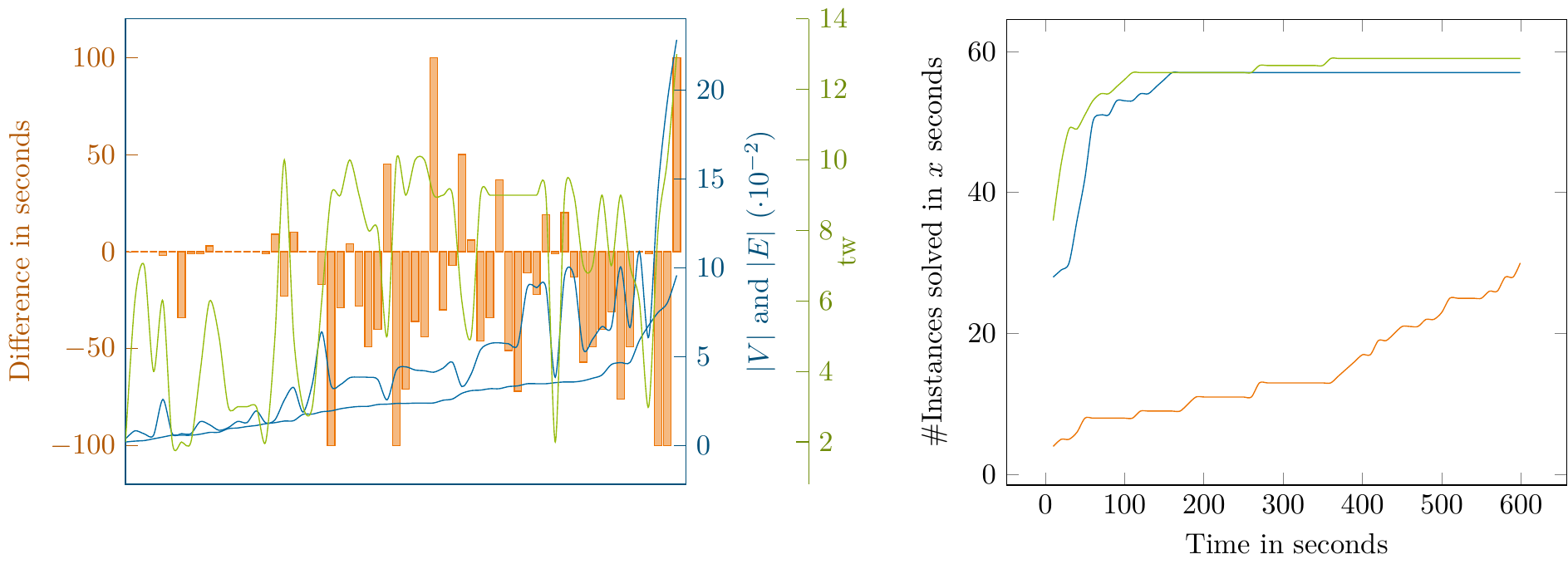}}%
\end{experiment}
  
  \begin{experiment}[h]
    \caption{$\Lang{dominating-set}$}
      \subcaption{Average, standard deviation, and median of the time (in seconds) each solver
        needed to solve $\Lang{dominating-set}$ over
        all instances of the data set from Section~\ref{section:Experiments}. The best values are
        highlighted.}
  \label{table:dominatingSet}
  \begin{tabular}{llll}
    \toprule
    & \dflat & \jatatosk & \sequoia\\
    \cmidrule(rl){1-4}
    Average Time & 420.14 &  \textbf{102.48} & 114.92\\
    Standard Deviation & 265.14 & \textbf{157.80} & 196.67\\
    Median Time & 600 & 44.5 & \textbf{20.5}\\
    \bottomrule
  \end{tabular}
  \vspace{.5cm}

  \subcaption{Comparison of solvers for the \Lang{dominating-set} problem on
    the complete data set of Section~\ref{section:Experiments}. The
    time is measured in seconds and the instances are sorted by the
    number of vertices, see also \textsf{(c)}.
    \label{figure:dominatingSet}
  }
  \centering%
  \resizebox{\textwidth}{!}{\includegraphics{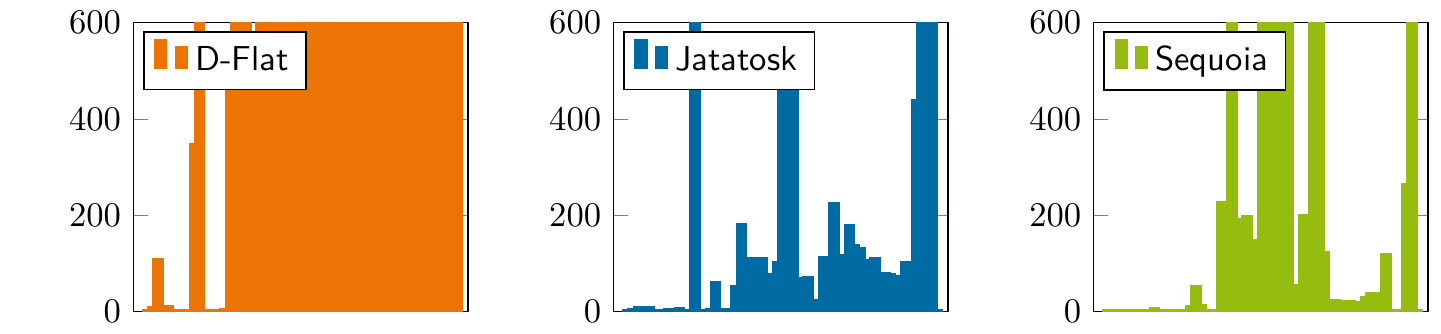}}%

  \subcaption{The left picture shows the \textcolor{col1}{difference}
    of \jatatosk\ against \dflat\ and \sequoia. A
    positive bar means that \jatatosk\ is faster by this amount in
    seconds, and a negative bar means that either \dflat\ or \sequoia\
    is faster by that amount. The bars are capped at $100$. On every instance, \jatatosk\ was compared
    against the solver that was faster on this particular instance. The
    image also shows for every instance the \textcolor{col2}{size} and
    the \textcolor{col3}{treewidth} of the input. The right image shows
    the number of instances that can be solved by each of the solvers in
    $x$ seconds, i.\,e., faster growing functions are better. The colors in this image
    are as in \textsf{(b)}.}
  \resizebox{\textwidth}{!}{\includegraphics{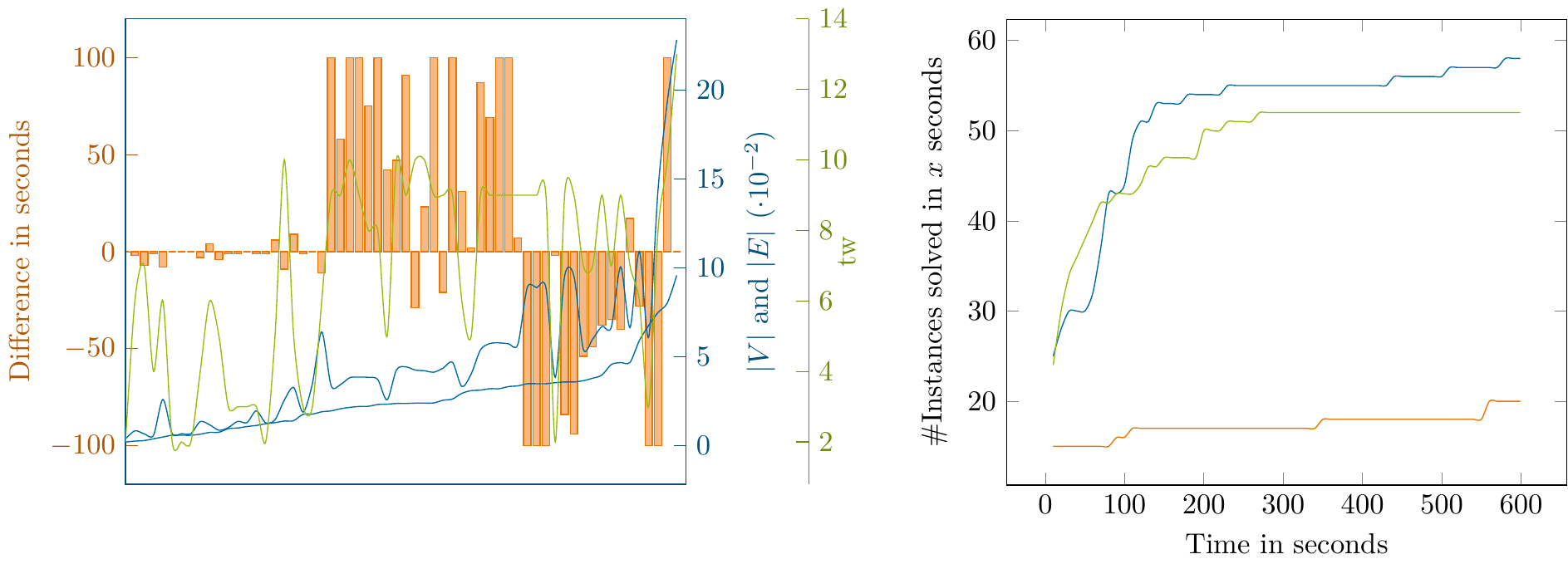}}
\end{experiment}

  \begin{experiment}[h]
    \caption{$\Lang{independent-set}$}
    \subcaption{Average, standard deviation, and median of the time (in seconds) each solver
      needed to solve $\Lang{independent-set}$ over
      all instances of the data set from Section~\ref{section:Experiments}. The best values are
      highlighted.}
  \label{table:independentSet}
  \begin{tabular}{llll}
    \toprule
    & \dflat & \jatatosk & \sequoia\\
    \cmidrule(rl){1-4}
    Average Time & 229.18 &  16.98 & \textbf{15.32}\\
    Standard Deviation & 272.64 & \textbf{18.17} & 45.53\\
    Median Time & 13 & 14 & \textbf{1}\\
    \bottomrule
  \end{tabular}
  \vspace{.5cm}

  \subcaption{Comparison of solvers for the \Lang{independent-set} problem on
    the complete data set of Section~\ref{section:Experiments}. The
    time is measured in seconds and the instances are sorted by the
    number of vertices, see also \textsf{(c)}.
    \label{figure:independentSet}
  }
  \centering%
  \resizebox{\textwidth}{!}{\includegraphics{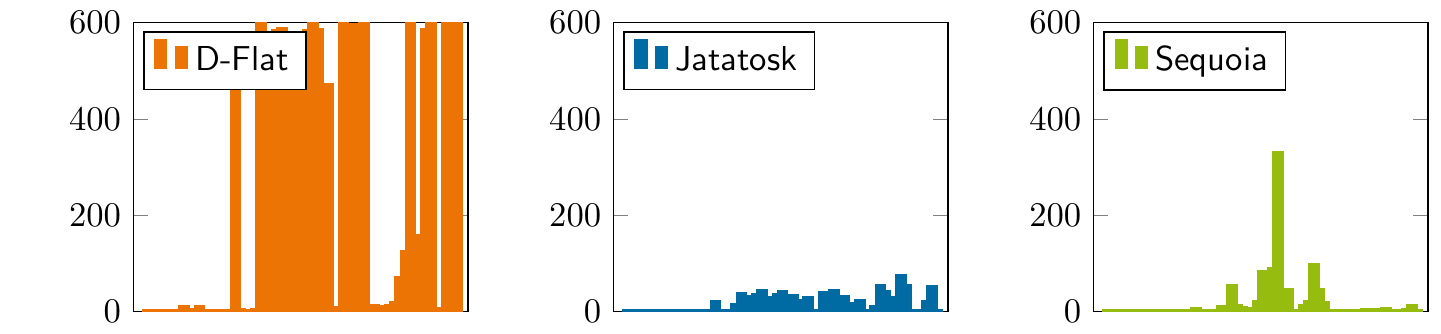}}%

  \subcaption{The left picture shows the \textcolor{col1}{difference}
    of \jatatosk\ against \dflat\ and \sequoia. A
    positive bar means that \jatatosk\ is faster by this amount in
    seconds, and a negative bar means that either \dflat\ or \sequoia\
    is faster by that amount. The bars are capped at $100$. On every instance, \jatatosk\ was compared
    against the solver that was faster on this particular instance. The
    image also shows for every instance the \textcolor{col2}{size} and
    the \textcolor{col3}{treewidth} of the input. The right image shows
    the number of instances that can be solved by each of the solvers in
    $x$ seconds, i.\,e., faster growing functions are better. The colors in this image
    are as in \textsf{(b)}.}
  \resizebox{\textwidth}{!}{\includegraphics{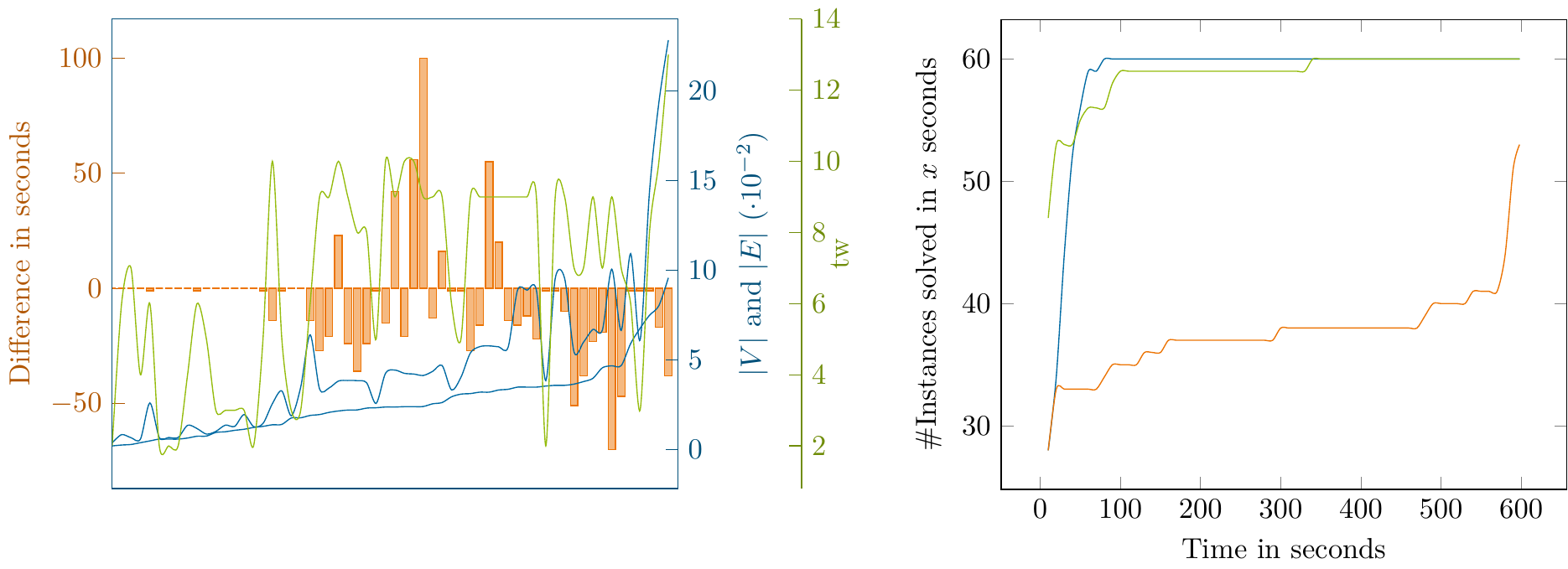}}%
\end{experiment}

  \begin{experiment}[h]
    \caption{$\Lang{feedback-vertex-set}$}
    \subcaption{Average, standard deviation, and median of the time (in seconds)
      each solver needed to solve $\Lang{feedback-vertex-set}$ over all
      instances of the data set from Section~\ref{section:Experiments}. The best
      values are highlighted. To solve the problem in \sequoia, we used the
      well-known fact that $F\subseteq V(G)$ is a feedback vertex set if for all
      non-empty subsets $W\subseteq V(G)\setminus F$, the induced graph $G[W]$
      contains a vertex of degree at most one.}
  \label{table:feedbackSet}
  \begin{tabular}{llll}
    \toprule
    & \dflat & \jatatosk & \sequoia\\
    \cmidrule(rl){1-4}
    Average Time & 587.84 & \textbf{303.6} & 384.16\\
    Standard Deviation & \textbf{77.72} & 292.76 & 282.28 \\
    Median Time & 600 & \textbf{548} & 600\\
    \bottomrule
  \end{tabular}
  \vspace{.5cm}

  \subcaption{Comparison of solvers for the \Lang{feedback-vertex-set} problem on
    the complete data set of Section~\ref{section:Experiments}. The
    time is measured in seconds and the instances are sorted by the
    number of vertices, see also \textsf{(c)}.
    \label{figure:feedbackSet}
  }
  \centering%
  \resizebox{\textwidth}{!}{\includegraphics{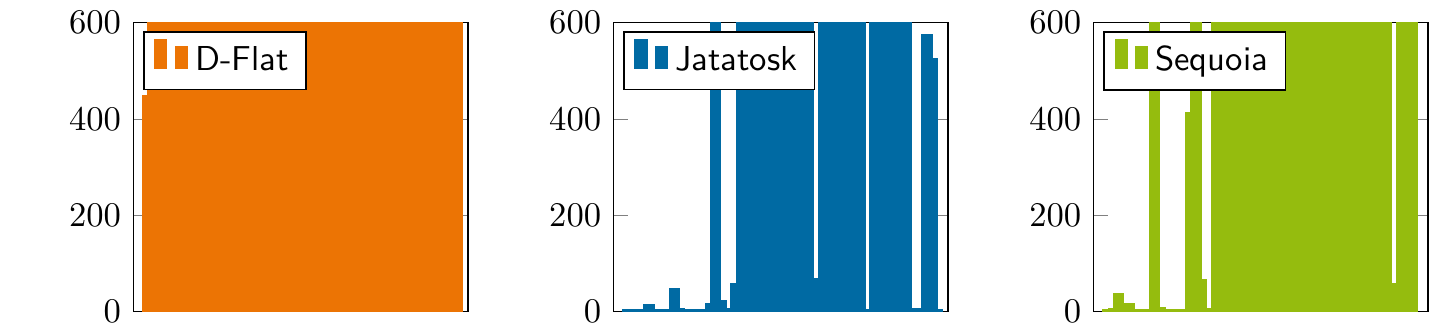}}%

  \subcaption{The left picture shows the \textcolor{col1}{difference}
    of \jatatosk\ against \dflat\ and \sequoia. A
    positive bar means that \jatatosk\ is faster by this amount in
    seconds, and a negative bar means that either \dflat\ or \sequoia\
    is faster by that amount. The bars are capped at $100$. On every instance, \jatatosk\ was compared
    against the solver that was faster on this particular instance. The
    image also shows for every instance the \textcolor{col2}{size} and
    the \textcolor{col3}{treewidth} of the input. The right image shows
    the number of instances that can be solved by each of the solvers in
    $x$ seconds, i.\,e., faster growing functions are better. The colors in this image
    are as in \textsf{(b)}.}
  \resizebox{\textwidth}{!}{\includegraphics{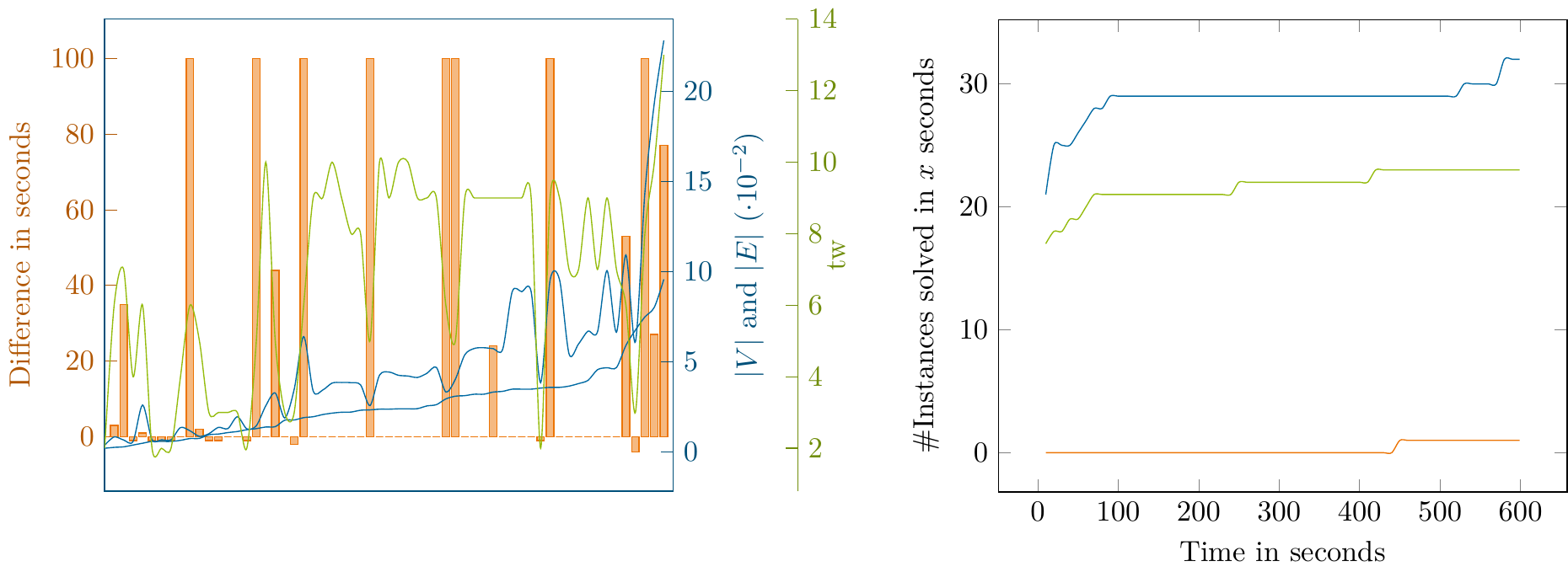}}
\end{experiment}
   
\end{document}